\newtheorem{theorem}{Theorem}[section]
\newtheorem{lemma}[theorem]{Lemma}
\newtheorem{definition}[theorem]{Definition}      
\title{Network Coding Applications for 5G Millimeter-Wave
  Communications} \author{Murali Narasimha \hspace{2cm} Hossein
  Bagheri\\ \\ Motorola Mobility Limited}
\begin{document}
\bibliographystyle{plain}
\maketitle
\begin{abstract}
The millimeter-wave bands have been attracting significant interest as
a means to achieve major improvements in data rates and network
efficiencies. One significant limitation for use of the
millimeter-wave bands for cellular communication is that the
communication suffers from much higher path-loss compared to the
microwave bands. Mil\linebreak[0]limeter-wave links have also been
shown to change rapidly, causing links between devices and access
points to switch among line-of-sight, non-line-of-sight and outage
states.  We propose using Random Linear Network Coding to overcome the
unreliability of individual communication links in such
millimeter-wave systems. Our system consists of devices transmitting
and receiving network-coded packets through several access points. For
downlink communication, network-coded packets are transmitted to a
device via multiple access points. For uplink communication, the
access points perform network coding of packets of several devices,
and then send the network-coded packets to the core network.  We
compare our approach against a naive approach in which non-network
coded packets are transmitted via multiple access points
(``Forwarding'' approach). We find that the network coding approach
significantly outperforms the forwarding approach. In particular,
network coding greatly improves the efficiency of the air-interface
transmissions and the efficiency of the backhaul transmissions for the
downlink and uplink communication, respectively.
\end{abstract}

\section{INTRODUCTION}
The mobile communications industry is embarking on a wide-ranging
effort to define and build the next generation of wireless systems
referred to as 5G. This is driven by large increases in mobile
communications usage and expected future growth that is staggering
\cite{Cisco2015},\cite{UMTS Forum2011}. 5G networks will be
expected to deliver as much as 1000 times the data rate of current
networks \cite{Osseiran2014}, \cite{Boccardi2014}.

As part of the 5G initiatives, there is enormous interest in
millimeter-wave (MmWave) bands between 30 GHz and 300 GHz
\cite{Boccardi2014}-\cite{Murdock2014}, where the available bandwidths
are much larger than today's cellular bands. While millimeter-wave has
historically been used for backhaul links and satellite
communications, it has not been considered suitable for cellular
communications due to the much higher path-loss that mmWave signals
experience. In order to overcome the path-loss, it is necessary to use
antenna arrays and perform beam-forming \cite{Khan2011}. Advances in
RF and semiconductor technologies have made the use of mmWave bands
more suitable for cellular communications \cite{Gutierrez2011}.
Specifically, such advances have made it possible to have antenna
arrays in areas small enough that they can be practically accommodated
inside mobile devices. Such beam-forming is considered an essential
enabling technology for millimeter-wave communication.

One distinctive characteristic of mmWave communication compared to
microwave is the highly directional nature of the signal path
\cite{Rangan2014}, \cite{Shokri-Ghadikolaei2015}. The beam-formed
communication makes the communication links much more directionally
sensitive. Obstructions can easily block the communication path; small
changes in orientation or small movements may cause \emph{``deafness''
}(due to the transmitter and receiver antennas not being correctly
pointed at each other) \cite{Talukdar2014}. Events that cause such
blocking and deafness are highly unpredictable (mobility, small
movements by user, temporary vehicular and other obstructions in the
environment, etc.). This makes it necessary to have redundant access
points such that in the event of a loss of communication to an access
point, the device can quickly switch to a different access point
\cite{Shokri-Ghadikolaei2015}.  However, switching to different access
points can result in interruptions in communication, with the overall
impact of the interruptions dependent on the frequency of the
switching and the time the device takes to establish an alternate
communication link.

Another distinctive characteristic is that the Doppler spread at
mmWave frequencies is much higher for a given speed, compared to
microwave frequencies. This leads to much smaller channel coherence
times compared to microwave (for example, at 30 km/h the coherence
times for 3 GHz and 30 GHz are 12 ms and 1.2 ms respectively),
suggesting that the channel can change very rapidly. Scheduling in
such an environment is likely to require very frequent feedback, which
tends to consume a lot of resources and energy. Moreover, if the
feedback communication occurs in the mmWave band, it is subject to the
same link breakage constraints mentioned above.

Relying on the redundancy of access points, we consider an
architecture in which a device maintains communication links to a
number of access points. Data can be transmitted from any of the
access points to the device and from the device to any of the access
points. We use network-coding techniques for data transmission on both
the downlink and the uplink. The general idea is that such an approach
is less dependent on frequent feedback, and the device continues to
receive (or transmit) the same data stream through some sub-set of the
access points even as connections to individual access points are
interrupted.

The paper is organized as follows. Section~\ref{sec:RelWork}
summarizes some of the related work in the area of millimeter wave
communications and network coding. Section~\ref{sec:SysMod} introduces
the system model and describes our use of Random Linear Network
Coding. We describe metrics that enable comparison of the network
coding based transmissions to conventional transmissions schemes. We
also derive bounds and estimates for the comparison metrics.
Section~\ref{sec:Sim} provides results of a Monte-Carlo system
simulation performed to understand the benefits of network coding in a
millimeter-wave deployment with access points deployed along a
street. Section~\ref{sec:Concl} has our concluding
  remarks and observations on possible future work.

\section{RELATED WORK}
\label{sec:RelWork}
\begin{figure}
\fbox{\includegraphics[height=4.1cm]{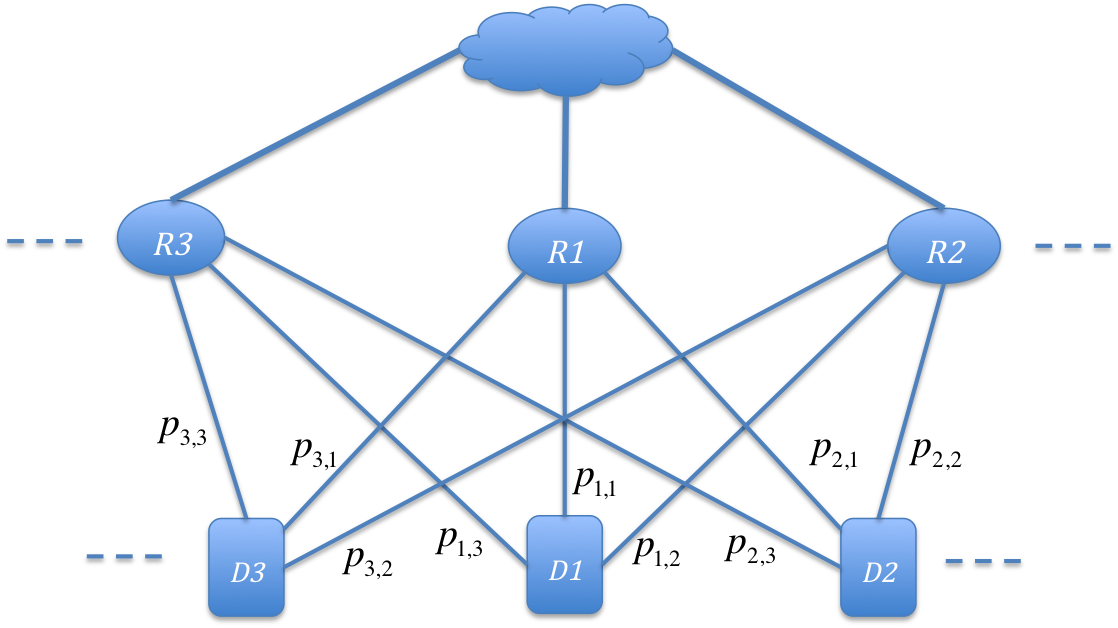}}
\caption{Block diagram of a network consisting of multiple Relay Nodes
  and multiple devices}
\label{fig:Fig1}
\end{figure}

Millimeter-wave for cellular networks has received a lot of attention
recently \cite{Boccardi2014}-\cite{special_issue1_2014}.  The issues
related to directionality of transmissions and the need for frequent
handovers have been studied in \cite{Talukdar2014}. Deployments of
Millimeter-wave access points for typical scenarios (street-side,
stadium, etc.) are considered and it is shown that, due to the
directionality and blocking properties, handovers occur much more
frequently than microwave networks.

Current cellular networks generally rely on the principle of having a
single serving access point. However, some enhancements, which involve
devices maintaining simultaneous links to different access points,
have been standardized. For example, Carrier Aggregation in LTE \cite{36_300}
allows for a device to be connected to and communicate with multiple
spatially separated nodes (Coordinated Multi-point Transmission).
Another example is Dual Connectivity in LTE \cite{36_300}, which allows for
concurrent communication with different (spatially separated) access
points, each independently scheduling packets for the device.

Network Coding was first proposed in \cite{Ahlswede2000} as a methodology for
improving throughput for multicast applications by transmitting
``combinations'' of packets. There have been some studies of
applications of network coding to wireless communications
\cite{Kishore2006}-\cite{Hausl2006}. To the best of our knowledge, there has not been any prior
work on investigating the applicability of network coding to
millimeter wave cellular networks.

\section{SYSTEM MODEL}
\label{sec:SysMod}
\begin{figure*}[t]
  \begin{tabular}{ll}
    {\begin{subfigure}[t]{0.47\textwidth}
        \fbox{\includegraphics[height=7cm, width=\textwidth]{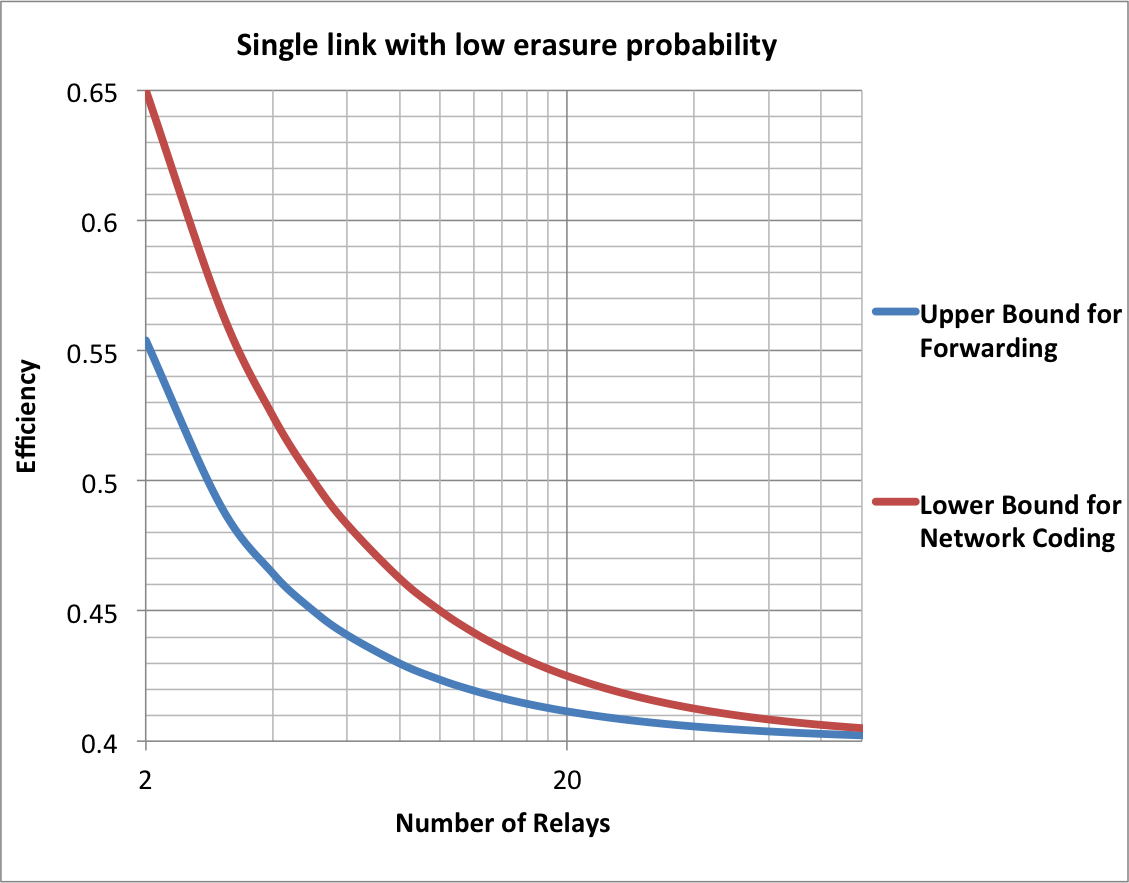}}
        \caption{\label{fig:Fig2a}}
      \end{subfigure}
    } &

    {\begin{subfigure}[t]{0.47\textwidth}
        \fbox{\includegraphics[height=7cm, width=\textwidth]{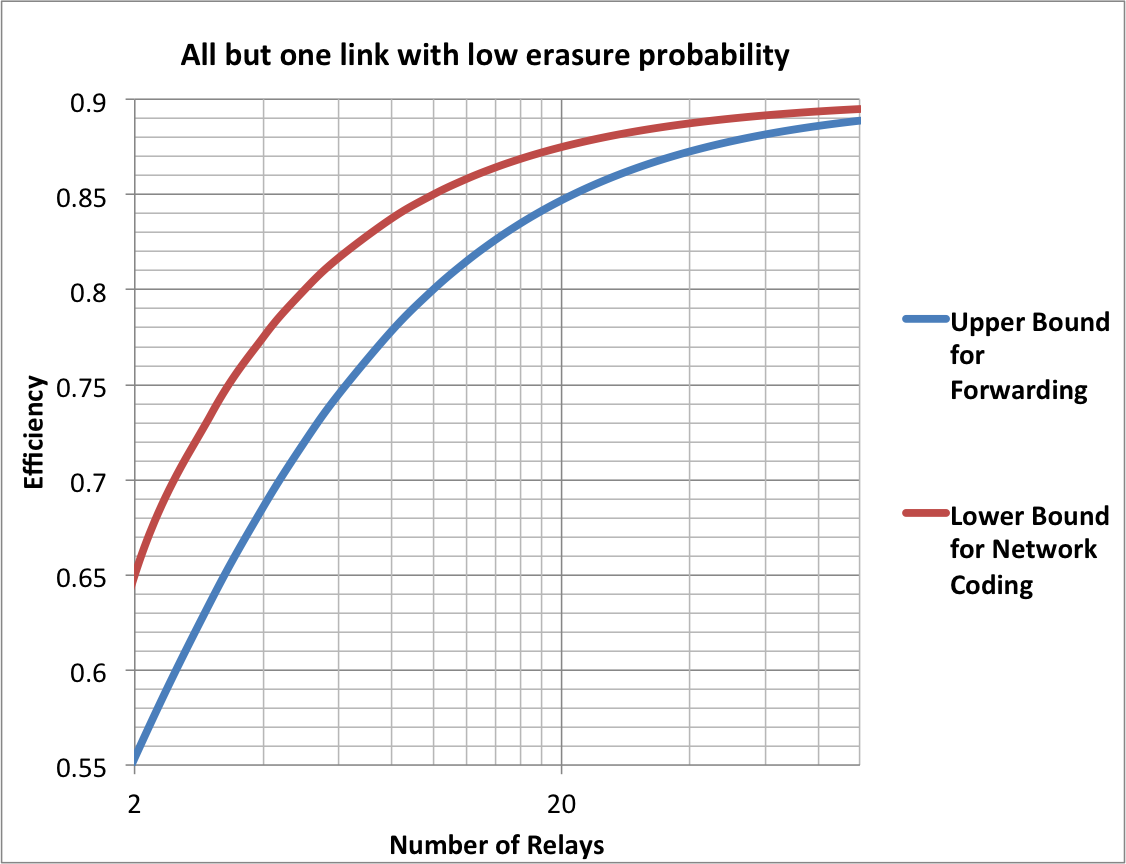}}
        \caption{\label{fig:Fig2b}}
      \end{subfigure}
    }
  \end{tabular}
  \caption{Upper and Lower Bounds for Forwarding and Network Coding
    respectively for Downlink transmissions: (\subref{fig:Fig2a}) for
    the case where there is a single link with low packet erasure
    probability, and (\subref{fig:Fig2b}) for the case where all but
    one link has low packet erasure probability. Low erasure
    probability of 0.1 and High erasure probability of 0.6.}
\label{fig:Fig2}
\end{figure*}

\begin{figure*}
  \begin{tabular}{ll}
    {\begin{subfigure}[t]{0.47\textwidth}
        \fbox{\includegraphics[height=7cm, width=\textwidth]{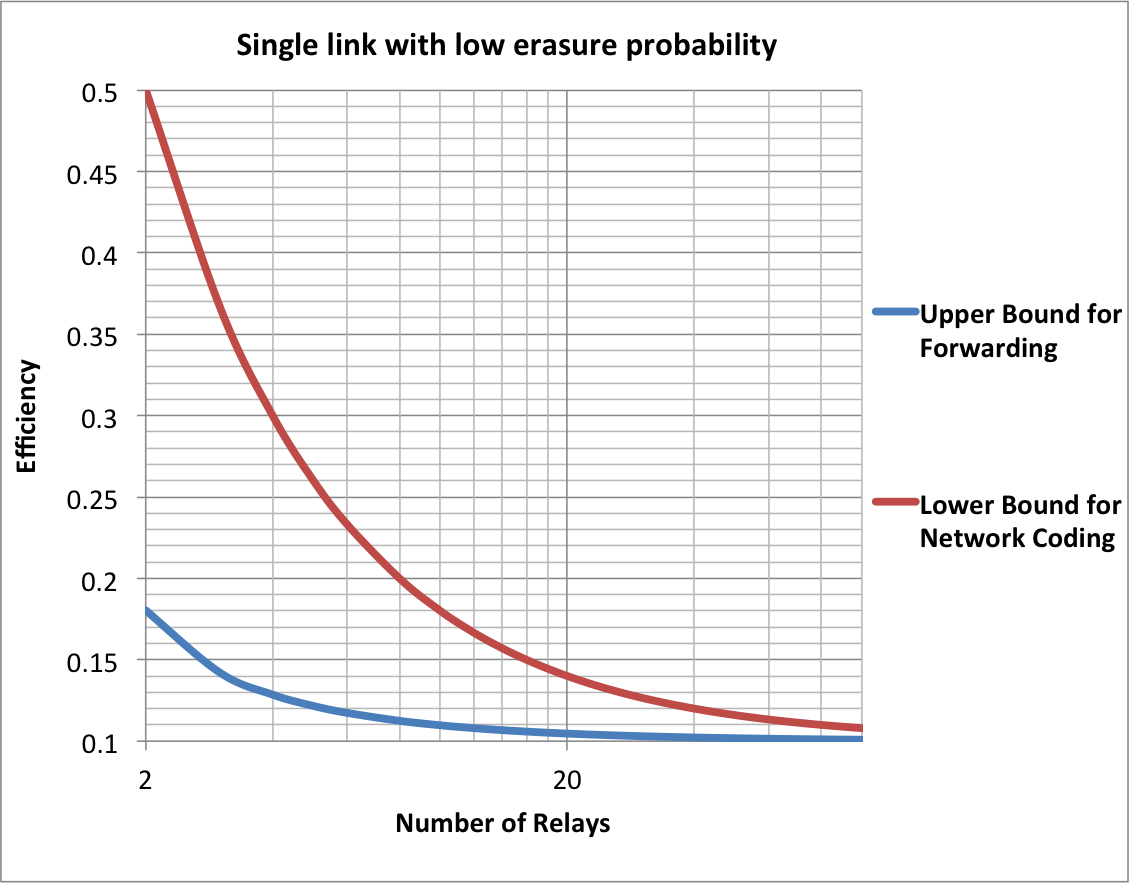}}
        \caption{\label{fig:Fig3a}}
      \end{subfigure}
    } &

    {\begin{subfigure}[t]{0.47\textwidth}
        \fbox{\includegraphics[height=7cm, width=\textwidth]{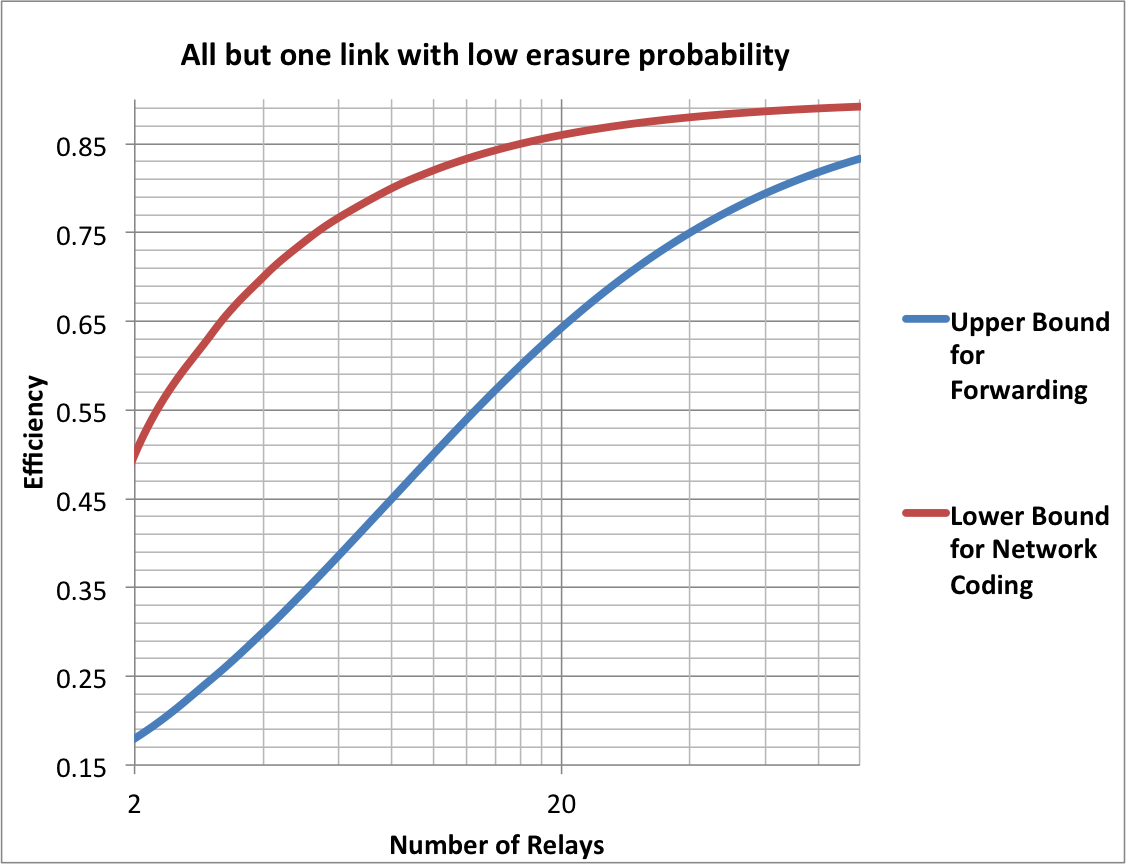}}
        \caption{\label{fig:Fig3b}}
      \end{subfigure}
    }
  \end{tabular}
  \caption{Upper and Lower Bounds for Forwarding and Network Coding
    respectively for Downlink transmissions: (\subref{fig:Fig3a}) for
    the case where there is a single link with low packet erasure
    probability, and (\subref{fig:Fig3b}) for the case where all but
    one link has low packet erasure probability. Low erasure
    probability of 0.1 and High erasure probability of 0.9.}
\label{fig:Fig3}
\end{figure*}

We consider a radio network comprising multiple relay nodes
$R1,R2,\ldots ,RN$ serving one or more devices $D1,D2,\ldots ,DN$ as
shown in Figure \ref{fig:Fig1}. Each device maintains mmWave
communication links to a set of relay nodes and data is concurrently
transmitted to and received from the device through the set of relay
nodes. The relay nodes are connected to the network through backhaul
links, which could be wired, or wireless.  Thus, the physical
realization of the relay nodes described here could be in the form of
small cells, wireless relays or remote radio heads.

Links between the devices and relay nodes are modeled as packet
erasure channels; $p_{i,j}$ represents the packet erasure probability
of the millimeter-wave link between device $Di$ and relay node $Rj$.
The packet erasure probabilities are time varying and their modeling
is described further below.

Millimeter-wave channel conditions can vary rapidly as shown in
[19]. In particular, a link can quickly change among line-of-sight,
non-line-of-sight and outage states.  Our study implicitly assumes
that the channel state is not known with certainty before transmitting
a packet to the device. Our intention is to compare performance of the
network coding based data transmission via several relay nodes to a
transmission approach in which non-network-coded packets are
transmitted via several relay nodes. We refer to the latter approach
as ``Forwarding''.  In this section, we first describe Random Linear
Network Coding, after which we describe the scheduling mechanisms for
the downlink and the uplink communication and provide some performance
bounds.

\subsection{Random Linear Network Coding}
\label{subsec-RLNC}
Given a set of $k$ packets $[P_1,P_2,\ldots ,P_k]$ chosen from a
Galois Field alphabet $F$, a random linear network-coded packet is
constructed as $\sum^{k}_{i=1} c_i\cdot P_i$ where $[c_1,c_2,\ldots
  ,c_k]$ is an ``encoding vector'' consisting of a set of coefficients
randomly chosen from $F$. If a receiver receives $k$ packets
$[r_1,r_2,\ldots ,r_k]$ with encoding vectors $[c^i_1,c^i_2,\ldots
  ,c^i_k]$ corresponding to each $r_i$, it can construct a transfer
matrix:
\[M = \left( \begin{array}{ccc}
    c^1_1 & c^1_2 & \cdots c^1_k \\
    c^2_1 & c^2_2 & \cdots c^2_k \\
    & \vdots \\
    c^k_1 & c^k_2 & \cdots c^k_k \end{array}
    \right)
\]
The receiver can then recover the original packets
using $M^{-1}\cdot[r_1,r_2,\cdots ,r_k]^T$. The
probability of $M$ not having an inverse can be made sufficiently small by choosing a large $F$.

\subsection{Scheduling and Data Transmission}
\label{subsec-sched}

In this study, time is divided into time-spans, with each time span
having multiple time-slots. A time-slot is the time duration of one
transmission over the air-interface.  A number of packets $k$ are to
be delivered in each time-span. In each time span the transmitting
side attempts to deliver the $k$ packets to the receiving side through
relays. The Forwarding approach and the Network coding approach are
described as follows:

\begin{itemize}
\item{\bf{\em{Forwarding}}}: For each of the $k$ packets, a relay node is
  selected. The transmitting side then performs transmissions until
  the packet is received.

\item{\bf{\em{Network Coding}}}: For downlink communication, while the
  $k$ packets are not decoded, the transmitting side (a) generates a
  Random Linear Network Coded (RLNC) packet from the $k$ original
  packets, (b) chooses a relay node, and (c) transmits the RLNC packet
  through the chosen relay node. For the uplink communication, any
  relay node that has received a subset of packets forms a network
  coded packet from the subset and transmits it to the network.
\end{itemize}

We compare the efficiency of the two approaches, which we define as
the ratio of the number of packets delivered to the number of
transmissions needed to deliver the packets. Estimating other metrics
such as throughput and delay requires specific assumptions regarding
the physical layer communication (e.g., transmission slot duration,
bandwidth, etc.). Although we do not model such specific physical
layer parameters, the efficiency metrics used are directly related to
throughput and delay. For example, a higher efficiency translates to a
higher data rate. A higher efficiency can also translate to lower
average packet delivery delay.

Comparing Network Coding to the Forwarding approach as described above
is motivated by having an ``apples-to-apples'' comparison. That is,
since the network coding approach uses multiple relay nodes, we
compare it to an approach that does not use network coding, but still
utilizes multiple relay nodes. Abstracting away from the physical
layer details, the Forwarding approach is similar to Coordinated
Multi-point Transmission in LTE ~\cite{Clerckx2012}. In the following
we study two types of network coding, namely: \emph{Intra-session
  network coding} for downlink communication, and \emph{Inter-session
  network coding} for uplink communication.

\subsubsection{Single Device - Intra-session Network Coding for Downlink}
\label{subsec-intrasessNC}

\begin{figure*}[t]
  \begin{center}
    \fbox{\includegraphics[width=14cm]{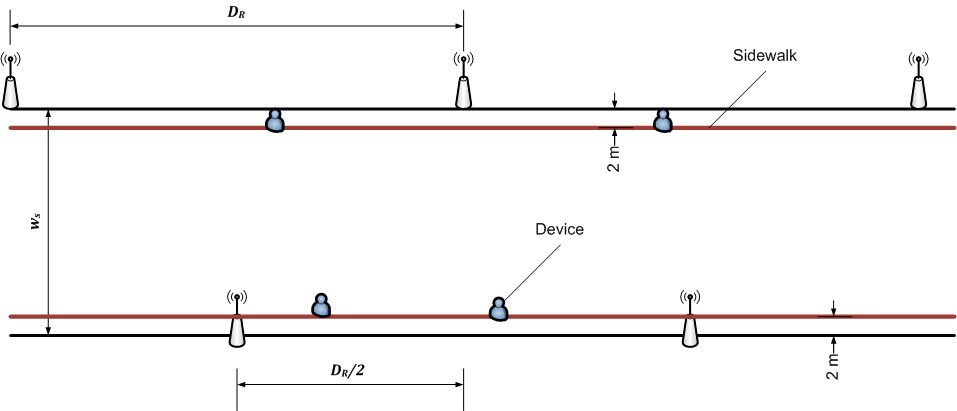}}
    \caption{Deployment scenario for simulation: Users are randomly
      dropped on sidewalks, and relays are equally spaced on each side of
      the street}
    \label{fig:Fig4}
  \end{center}
\end{figure*}

We first consider a setup in which the network is transmitting data to
a single device $D$ through $N$ relay nodes. Forwarding and Network
Coding are performed as described above and the selection of relays is
based on a uniform random distribution. To derive the efficiency of
the two approaches, let $R_1,R_2,\cdots ,R_N$ be the relay nodes,
$P_1,P_2,\cdots ,P_k$ be the packets to be transmitted in the
time-span, and $p_1,p_2,\cdots ,p_N$ be the packet erasure
probabilities of the links from the corresponding relay nodes to $D$.

For the forwarding approach, given the random selection of relay
nodes, each relay node is expected to transmit $\lceil \frac{k}{N}
\rceil$ original packets on average. The expected value of the number
of transmissions through relay node $R_i$ is $\lceil \frac{k}{N}\rceil
\frac{1}{(1-p_i)}$. The total number of transmissions to deliver the
$k$ packets is $\sum^{N}_{i=1} \lceil \frac{k}{N}\rceil
\frac{1}{(1-p_i)}$. The efficiency for the Forwarding approach is
therefore:

\begin{equation}
  \mathit{eff_F} = \frac{k}{\lceil \frac{k}{N}\rceil
    \sum^{N}_{i=1}\frac{1}{(1-p_i)}}
  \label{eq1}
\end{equation}

Observing that $\lceil\frac{k}{N}\rceil \geq \frac{k}{N}$, we derive
an \emph{upper bound} for the forwarding efficiency:

\begin{equation}
  \mathit{eff_F} \leq \frac{N}{\sum^{N}_{i=1}\frac{1}{(1-p_i)}}
\label{eq2}
\end{equation}

For the Network Coding approach, let
$\mathcal{P}_1,\mathcal{P}_2,\cdots ,\mathcal{P}_L$ be the RLNC
packets that are transmitted through the relay nodes to deliver the
original packets $P_1,P_2,\cdots ,P_k$. Given the random selection of
relay nodes, each relay node is expected to transmit $\lceil
\frac{L}{N}\rceil$ RLNC packets on average. Of the number of RLNC
packets transmitted through a given relay node, the number that is
successfully received follows a binomial distribution with parameters
$\lceil\frac{L}{N}\rceil$ and $p_i$. The expected value of the number
of RLNC packets received through $R_i$ is
$\lceil\frac{L}{N}\rceil(1-p_i)$.  The total number of RLNC packets
transmitted to deliver the $k$ original packets is
$\sum^{N}_{i}\lceil\frac{L}{N}\rceil(1-p_i)$ . In order to decode the
$k$ original packets, $k$ RLNC packets need to be received. Therefore
we have:
\begin{equation}
  k = \lceil{\frac{L}{N}}\rceil \sum^{N}_{i=1}(1-p_i)
  \label{eq3}
\end{equation}
The efficiency of the network coding approach is therefore:
\begin{equation}
  \mathit{eff_C} = \frac{\lceil\frac{L}{N}\rceil \sum^{N}_{i=1}(1-p_i)}{L}
\label{eq4}
\end{equation}

Noting that $\lceil\frac{L}{N}\rceil \geq \frac{L}{N}$, we derive the
following \emph{lower bound} for network coding:
\begin{equation}
  \mathit{eff_C} \geq \frac{\sum^{N}_{i=1} (1-p_i)}{N}
\label{eq5}
\end{equation}

Note that the difference between the forwarding and the network coding
approaches is that for the forwarding case, a single node is
responsible for delivering a particular packet leading to bound
(\ref{eq2}) whereas for the network coding case, it is sufficient to
correctly receive a total of $k$ packets that may have been received
via multiple nodes.

To understand how the two schemes perform compared to each other, the
best (i.e., upper bound) performance of forwarding is compared to the
worst (i.e., lower bound) performance of network coding for three
different scenarios described in the following. In the first scenario,
$p_{i}=p, \ \forall i$. In this scenario, we get $\mathit{eff_F}\leq
1-p \leq \mathit{eff_C}$. That is, when all links have equal erasure
probability, network coding performs at least as well as forwarding.

Figures \ref{fig:Fig2} and \ref{fig:Fig3} show a comparison of the
efficiency bounds for a second and a third specific scenario
respectively. For the second scenario, Figure \ref{fig:Fig2a}, one
link from $D$ to one of the Relays has a low packet erasure
probability, while all the other links to the Relays have a high
packet erasure probability. For the third scenario, Figure
\ref{fig:Fig2b}, one link from $D$ to one of the relays has a high
packet erasure probability, while all the other links have a low
packet erasure probability. For both asymmetric erasure probability
cases a low and high packet erasure probabilities of 0.1 and 0.6
respectively are assumed. Similarly, Figure \ref{fig:Fig3} shows a
corresponding comparison of the efficiency bounds for asymmetric cases
with low and high packet erasure probabilities of 0.1 and 0.9
respectively. That is, Figure \ref{fig:Fig3a} shows the upper bound
for forwarding efficiency and the lower bound for network coding
efficiency with a single link having low erasure probability and all
other links having high erasure probability. Figure \ref{fig:Fig3b}
shows the upper bound for forwarding efficiency and the lower bound
for network coding efficiency with a single link having high erasure
probability and all other links having low erasure probabilities.

The main observation from Figures \ref{fig:Fig2} and \ref{fig:Fig3} is
that the performance improvements that are expected for network coding
relative to forwarding are larger for scenarios where there is a larger
difference between the erasure probabilities on the different links.
It is also worth emphasizing that Figure \ref{fig:Fig2} and Figure
\ref{fig:Fig3} show upper and lower bounds for the forwarding
efficiency and the network coding efficiency, respectively. That is,
under the assumptions made above, \emph{forwarding is expected to
  perform no better than and network coding is expected to perform at
  least as well as the efficiency plots shown}.

\subsubsection{Multiple Devices - Inter-session Network Coding for Uplink}
\label{subsec-intersess}

We now consider a setup in which devices \(D1,D2,\cdots \linebreak[0]
,Dz\) transmit to the network through Relay nodes
\(R_1,R_2,\linebreak[0] \cdots ,R_N\). The goal is to compare the
efficiency of the backhaul usage for forwarding and network coding.

Each device transmits one packet in a time span, with re-transmissions
as necessary to ensure that at least one of the relay nodes receives
the packet. For the forwarding approach, each Relay node simply
transmits to the network the packets that it has received\footnote{For
  both Forwarding and Network coding, we assume that a given Relay
  node has no knowledge of which packets have been received by other
  Relay nodes. Such knowledge would require extensive signaling
  between the Relay nodes.}. For the network coding approach, each
Relay node constructs network-coded packets from the original packets
it has received and transmits them to the network over the backhaul
link. Specifically, if a relay node does not receive a packet from a
particular device, it uses an encoding coefficient of 0 corresponding
to that device; otherwise, a non-zero encoding coefficient is randomly
chosen. Once an adequate number of packets are received to be able to
decode the original packets at the network side in the time span, the
relay nodes do not transmit further network-coded packets during that
time span.

To calculate the backhaul efficiency, let $P_1,P_2,\cdots ,P_z$ be the
packets from \(D1,D2,\cdots ,Dz\) in a time span. Let
$p_{i,1},p_{i,2},\cdots ,p_{i,N}$ be the packet erasure probabilities
for the links from $Di$ to $R_1,R_2,\cdots ,R_N$.  As mentioned above,
each device $Di$ transmits its packet $P_i$ until it is received by at
least one of the relay nodes. The expected number of transmissions of
$P_i$ can be written as: 
\begin{equation}
\mathbb{E}[[P_i]] = \frac{1}{1 - \prod^N_{j=1}p_{i,j}}.
\end{equation}

The probability distribution of the number of times $r_{i,j}$ that
$P_i$ is received successfully at $R_j$ from $n_{i}$ transmissions
of $P_i$, follows a binomial distribution. Thus the probability that
$P_i$ is received successfully at $R_j$ at least once is
$1-{p_{i,j}}^{n_{i}}$. We approximate the probability of $R_j$
receiving $P_i$ as $1-(p_{i,j})^{\mathbb{E}[[P_i]]}$.  For forwarding,
given that the Relays simply transmit the packets received from the
devices, the expected value of the number of backhaul transmissions of
$P_i$ is obtained by summing over the relay nodes, i.e.,
$\sum^{N}_{j=1}(1-(p_{i,j})^{\mathbb{E}[[P_i]]})$.  The expected value
of the of the number of backhaul transmissions of all packets
$P_1,P_2,\cdots ,P_z$ is given by:
\begin{equation}
  \sum^{z}_{i=1} {\sum^{N}_{j=1}(1-(p_{i,j})^{\mathbb{E}[[P_i]]})}
\end{equation}
assuming no acknowledgment feedback from the network to relays for
each individual packet. The backhaul efficiency is defined as the
ratio of the number of original packets per time span to the average
number of backhaul transmissions per time span. The expected value of
the backhaul efficiency is approximated as:
\begin{equation}
  \mathit{bkEff_F} \approx \frac{z}{\sum^{z}_{i=1}
    {\sum^{N}_{j=1}(1-(p_{i,j})^{\mathbb{E}[[P_i]]})}}
  \label{eq:bkeff_f1}
\end{equation}

For erasure probabilities that are close to 1, the approximation of Eq
(\ref{eq:bkeff_f1}) can yield efficiency values that are greater than
1. This is due to the fact that the expected value of the number of
backhaul transmissions of $P_i$, as represented by $\sum_{j=1}^N (1 -
(p_{i,j})^{\mathbb{E}[[P_i]]})$, is less than 1 for cases where $N$ is
small and $p_{i,j}$ are close to 1. However, since the number of
backhaul transmissions of $P_i$ is at least 1 in practice, we modify
Eq (\ref{eq:bkeff_f1}) as follows\footnote{It is remarked that we
  found the above approximation to be quite tight through numerous
  Monte-Carlo simulations, particularly for low to medium erasure probabilities.}:
\begin{equation}
  \mathit{bkEff_F} \approx \frac{z}{\sum^{z}_{i=1}
    \max(1, {\sum^{N}_{j=1}(1-(p_{i,j})^{\mathbb{E}[[P_i]]})})}
  \label{eq:bkeff_f2}
\end{equation}

\begin{figure*}
  \begin{tabular}{ll}
    {\begin{subfigure}[t]{0.47\textwidth}
        \fbox{\includegraphics[height=7.5cm,width=\textwidth]{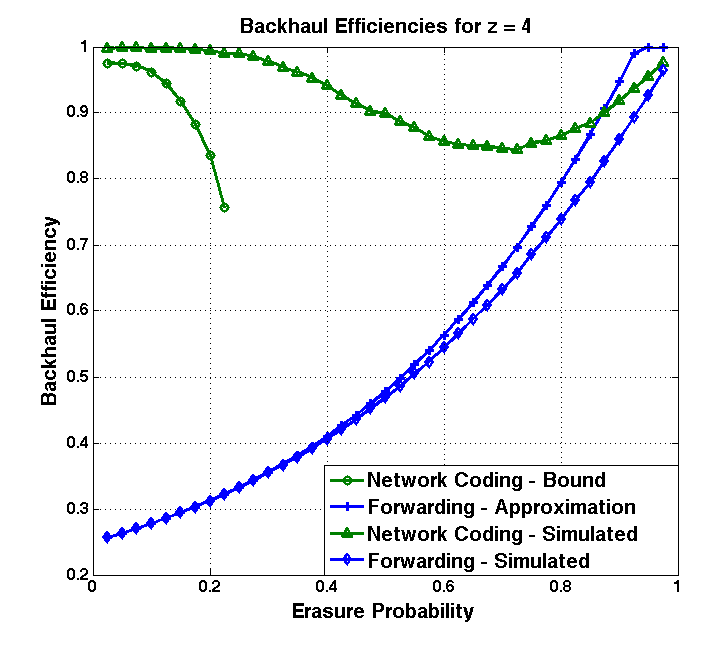}}
      \end{subfigure}
    } &

    {\begin{subfigure}[t]{0.47\textwidth}
        \fbox{\includegraphics[height=7.5cm,width=\textwidth]{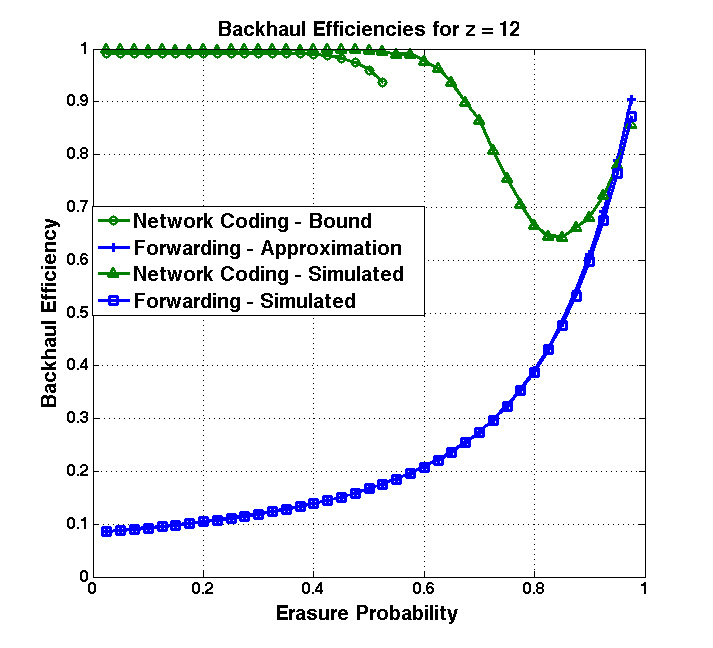}}
      \end{subfigure}
    }
  \end{tabular}
  \caption{Theoretical and Simulated Backhaul Efficiencies for equal
    link erasure probabilities}
  \label{fig:beffc_vs_befff}
\end{figure*}

For the Network Coding case, the backhaul efficiency calculation
involves subtleties in matrix rank calculation considering each relay
may have received a subset of packets to be network-coded.  In the
following, a lower bound on the efficiency is obtained for the
scenario in which $\forall i,j,\ p_{i,j}=p $, and $N\leq
z$,\footnote{The $N>z$ case is discussed at the end of this section.}
and the bound derivation for the \emph{asymmetric} erasure probability
cases is left as future work. The main steps of the derivations are
provided here, while the details are presented in Appendix \ref{sec:
  Appendix-NC-Backhaul}.
\begin{enumerate}
\item The average number of the required backhaul transmissions
  $\beta_{nc}$ is upper bounded in Eq (\ref{eq: beta_bound1}) (hence
  providing a lower bound for the backhaul efficiency of network coding).  The
  wireless network needs to collect enough (i.e., $\beta_{nc}$)
  network-coded packets to ensure $z$ of them are independent. The
  encoding coefficients corresponding to each received network-coded
  packet are randomly drawn from a Galois field of size $q$ according
  to the distribution $\Omega$: $0$ with probability $p$, and $w \in \{
  {1,\cdots, q-1}\}$ with probability $\frac{1-p}{q-1}$. To upper bound
  $\beta_{nc}$, we assume the probability of a randomly chosen $z
  \times z$ matrix $M$ being singular is $\phi$, if each element of
  the matrix is drawn according to $\Omega$.
\begin{eqnarray}\label{eq: beta_bound1}
\beta_{nc}(\phi) \leq \sum_{l=0}^{\infty} (z+l)\,
\big[1-\phi^{\zeta(l)-\zeta(l-1)}\big] \,\phi^{\zeta(l-1)} ,
\end{eqnarray}
where $\zeta(l)=\binom{z+l}{z}$ and $\zeta(-1)=0$.
\item It is shown that for $\phi\leq \phi_{ub} <1$, $\beta_{nc}(\phi) <
  \beta_{nc}(\phi_{ub})$, where
\begin{equation}\label{eq: phi_ub1}
\phi_{ub}=\log_q(\bar{l}(M)+1),
  \end{equation}
  and $\bar{l}(M)$ denotes the expected number of linear dependencies
  of the rows of $M$ (see definition \ref{def: ld}).
\end{enumerate}

We derive a lower bound for the backhaul efficiency
for network coding shown below:
\begin{equation}\label{eq:beffc}
  \mathit{bkEff_C} \geq \frac{z}{\sum_{l=0}^{\infty} (z+l)\ 
    \big[1-\phi_{ub}^{\zeta(l)-\zeta(l-1)}\big] \ 
    \phi_{ub}^{\zeta(l-1)}}
\end{equation}
The details of the derivation are provided in appendix
\ref{sec-ncBoundDerv}. Note that the bound proposed in Eq
(\ref{eq:beffc}) is valid for $\phi<1$, which for a given code length
$z$ and field size $q$, translates to a feasible region for erasure
probability $p$ according to Eq. (\ref{eq: phi_ub1}).

Since the network coding bound is derived only for the case of
symmetric link erasure probability, we compare in this section the
backhaul efficiency for both Forwarding and Network Coding assuming
$p_{i,j}=p \ \forall i,j$, and leave the analysis of the general case
to Section \ref{subsec-UpSim}. When $\forall i,j,\ p_{i,j}=p$:
  \[\mathit{bkEff_F} \approx
  \min\Big[1,\frac{1}{N(1-p^{\frac{1}{1 - p^N}})}\Big].\] 

Figure \ref{fig:beffc_vs_befff} compares the following: (i) the lower
bound of network coding backhaul efficiency, (ii) the approximate
forwarding backhaul efficiency, (iii) the simulated backhaul
efficiency for network coding, and (iv) the simulated backhaul
efficiency for forwarding. Figure \ref{fig:beffc_vs_befff}(a) and
Figure \ref{fig:beffc_vs_befff}(b) show the comparison for $z = 4$ and
$z = 12$ respectively, and $q=1024$ is assumed for both. It can be
seen from the figure that:
\begin{itemize}
\item The network coding backhaul efficiency bound is applicable to a
  range of erasure probability $p$, and outside of this range,
  $\mathit{bkEff_C}$ is undefined. The range of $p$ for which
  $\mathit{bkEff_C}$ is defined corresponds to
  $0\leq\phi<1$. Additionally, increasing the code length $z$
  increases the range over which $\mathit{bkEff_C}$ is defined.
\item The network coding efficiency is close to 1 for
  a range of erasure probabilities. As code length $z$ increases the
  range of erasure probabilities for which the efficiency is close to
  1 grows.
\item For medium to high erasure probabilities, the network coding
  lower bound in Eq (\ref{eq:beffc}) deviates significantly from the
  simulated network coding backhaul efficiency. The following is an
  explanation for this deviation. Note that Eq (\ref{eq:beffc})
  does not take into account which relay nodes transmit the
  packets. Consequently, it includes cases where a relay node
  transmits redundant packets to the network (for example, a relay
  node may transmit a first network coded packet based on an encoding
  vector $[c_1, 0, 0, 0]$ and a later network coded packet based on an
  encoding vector $[c_2, 0, 0, 0]$, where $c_1$, and $c_2$ are
  randomly chosen coefficients from alphabets of the field). In the
  computation of the simulated network coding backhaul efficiencies,
  transmission of such redundant packets is avoided. The redundant
  transmissions cause a gap between the lower bound and the simulated
  backhaul efficiency, and the size of the gap increases with
  increasing erasure probabilities. Deriving a tighter lower bound for
  the network coding backhaul efficiency by avoiding transmission of
  redundant network coded packets is a topic for further study.
\end{itemize}
We make a final remark regarding the $N>z$ case. If relay nodes
transmit their respective network coded packets at the same time, it
is possible that one network coded packet from each relay node (i.e.,
$N$ network coded packets) is transmitted to the network although
fewer than $N$ network coded packets are needed to decode the $z$
packets. In such a case, the upper bound on the number of backhaul
transmissions would be $\max(N, \beta_{nc}(\phi_{ub}))$. Thus, if
$N>z$ and $p$ is small, there can be redundant transmissions, which in
turn reduces the backhaul efficiency for network coding. Consequently,
it is not beneficial to use more than $z$ relay nodes in such cases.

\section{SIMULATION}
\label{sec:Sim}
\subsection{Deployment Scenario}
\label{subsec-DepScn}
\begin{figure*}
  \begin{tabular}{cc}
    
    { \begin{subfigure}[t]{0.47\textwidth}
        \fbox{\includegraphics[height=5.8cm,width=\textwidth]{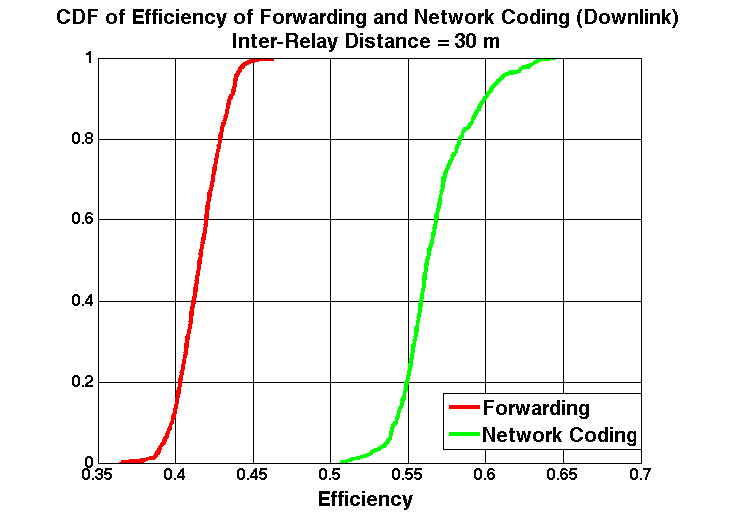}}
        \caption{\label{fig:Fig5a}}
      \end{subfigure}
    }
    &%
    { \begin{subfigure}[t]{0.47\textwidth}
        \fbox{\includegraphics[height=5.8cm,width=\textwidth]{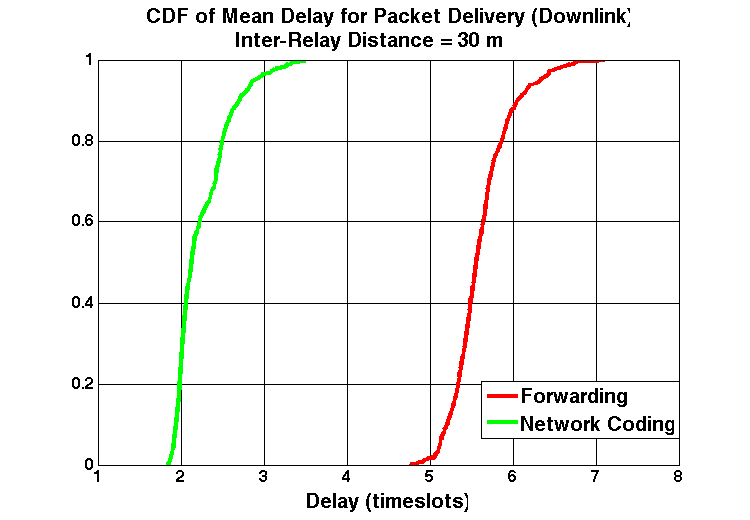}}
        \caption{\label{fig:Fig5b}}
      \end{subfigure}
    } \\

    { \begin{subfigure}[t]{0.47\textwidth}
        \fbox{\includegraphics[height=5.8cm, width=\textwidth]{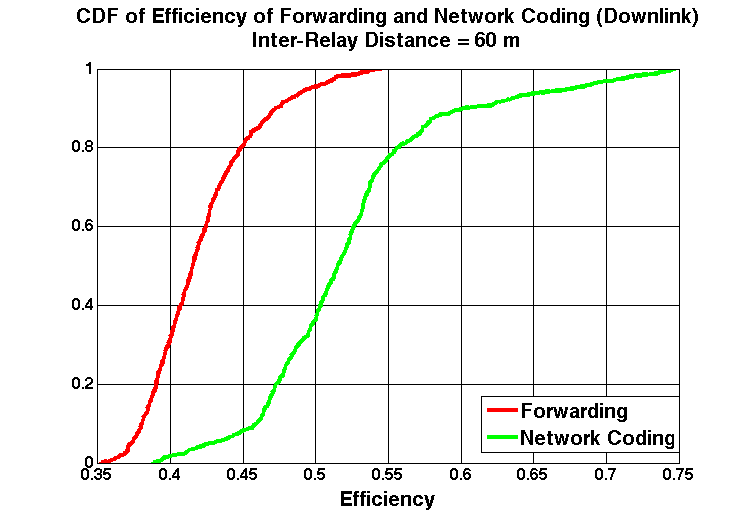}}
        \caption{\label{fig:Fig5c}}
      \end{subfigure}
    }
    &%
    { \begin{subfigure}[t]{0.47\textwidth}
        \fbox{\includegraphics[height=5.8cm, width=\textwidth]{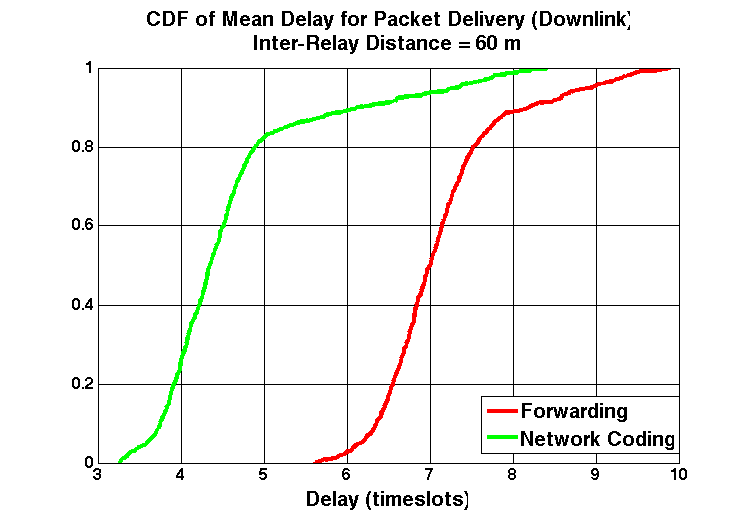}}
        \caption{\label{fig:Fig5d}}
      \end{subfigure}
    }\\

    { \begin{subfigure}[t]{0.47\textwidth}
        \fbox{\includegraphics[height=5.8cm, width=\textwidth]{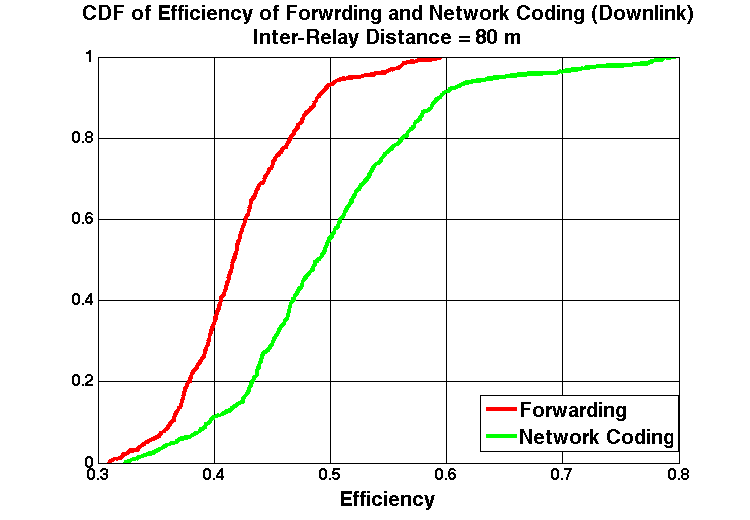}}
        \caption{\label{fig:Fig5e}}
      \end{subfigure}
    }
    &%
    { \begin{subfigure}[t]{0.47\textwidth}
        \fbox{\includegraphics[height=5.8cm, width=\textwidth]{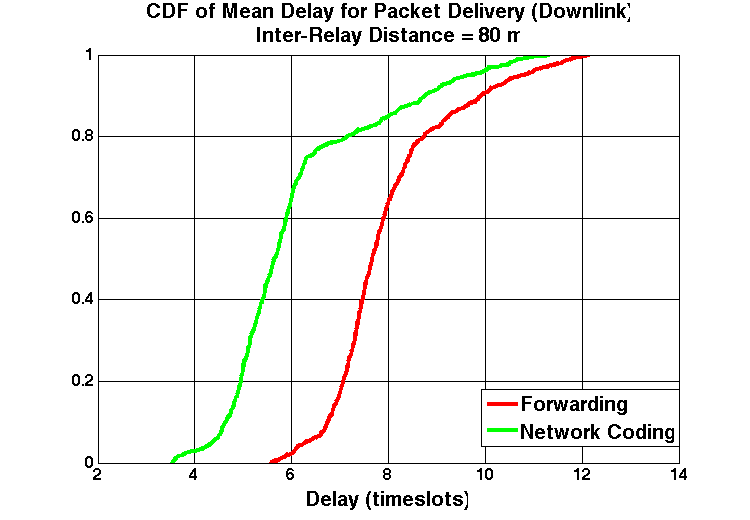}}
        \caption{\label{fig:Fig5f}}
      \end{subfigure}
    }
  \end{tabular}
  \caption{Downlink efficiency and delay performance of network coding
    compared to forwarding for different relay densities}
\label{fig:Fig5}
\end{figure*}

\begin{figure*}
  \begin{tabular}{cc}

    { \begin{subfigure}[t]{0.47\textwidth}
        \fbox{\includegraphics[height=5.8cm, width=\textwidth]{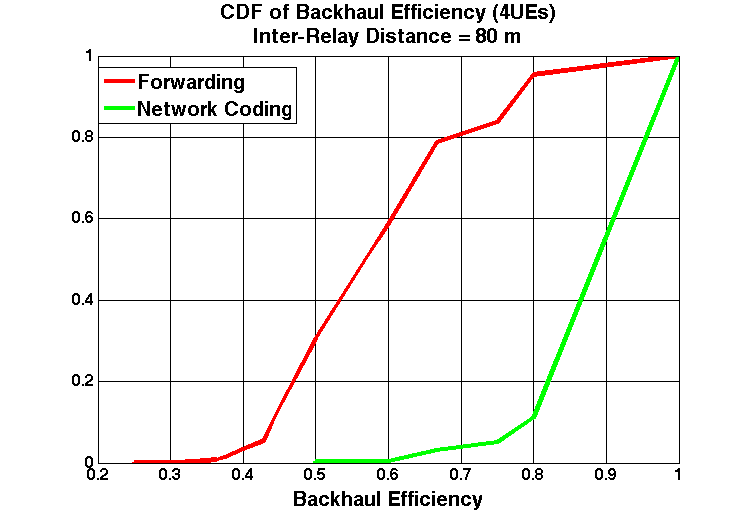}}
        \caption{\label{fig:Fig6a}}
      \end{subfigure}
    }
    &%
    { \begin{subfigure}[t]{0.47\textwidth}
        \fbox{\includegraphics[height=5.8cm, width=\textwidth]{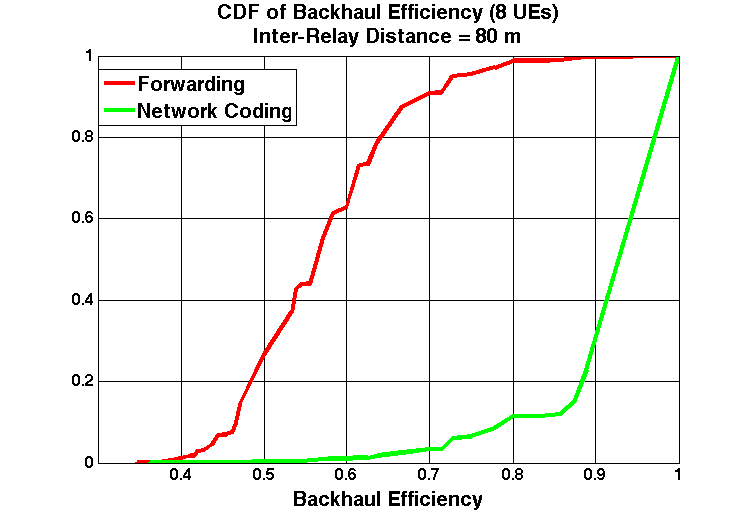}}
        \caption{\label{fig:Fig6b}}
      \end{subfigure}
    } \\%

    {\begin{subfigure}[t]{0.47\textwidth}
        \fbox{\includegraphics[height=5.8cm, width=\textwidth]{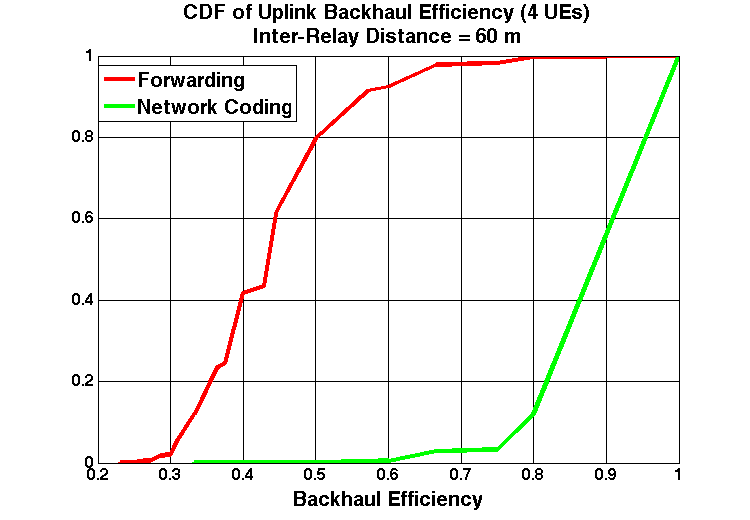}}
        \caption{\label{fig:Fig6c}}
      \end{subfigure}
    }
    &%
    {\begin{subfigure}[t]{0.47\textwidth}
        \fbox{\includegraphics[height=5.8cm, width=\textwidth]{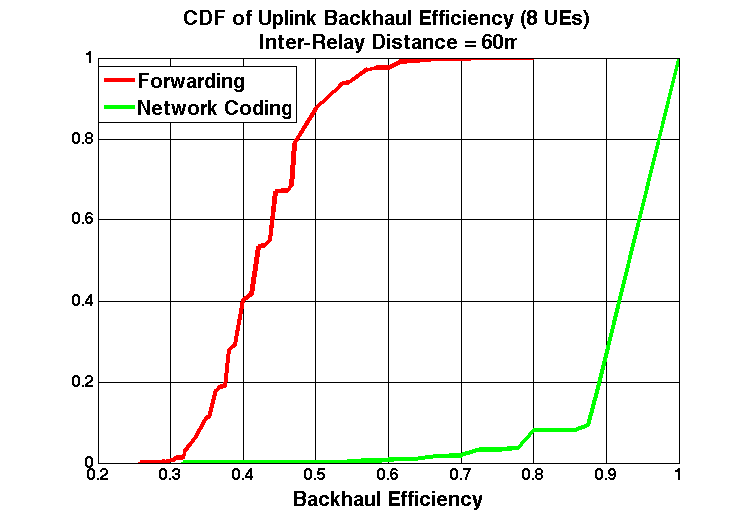}}
        \caption{\label{fig:Fig6d}}
      \end{subfigure}
    } \\%
      
    { \begin{subfigure}[t]{0.47\textwidth}
        \fbox{\includegraphics[height=5.8cm, width=\textwidth]{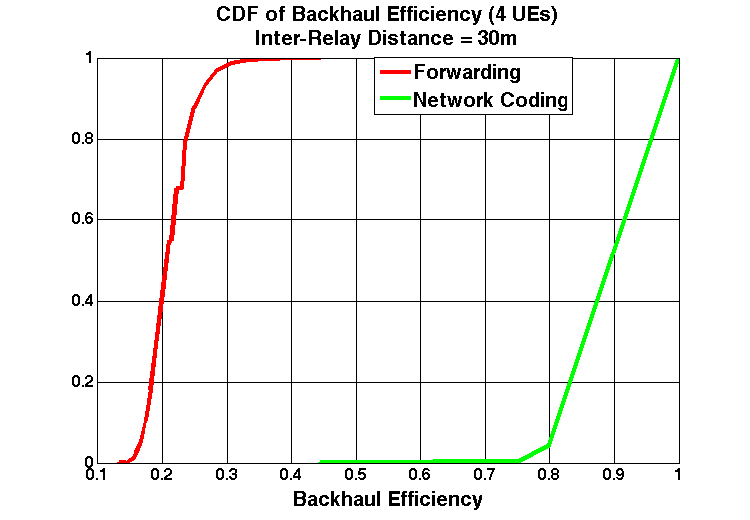}}
        \caption{\label{fig:Fig6e}}
      \end{subfigure}
    }
    &%
    { \begin{subfigure}[t]{0.47\textwidth}
        \fbox{\includegraphics[height=5.8cm, width=\textwidth]{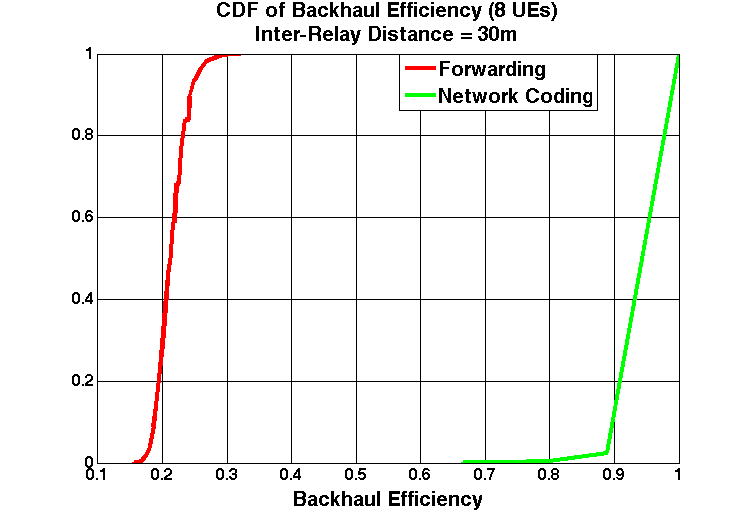}}
        \caption{\label{fig:Fig6f}}
      \end{subfigure}
    } \\%
    
  \end{tabular}
  \caption{Uplink backhaul efficiency network coding compared to
    forwarding for different network code length (number of devices)
    and relay densities}
\label{fig:Fig6}
\end{figure*}

In this section, Monte-Carlo system simulation results are provided to
evaluate the benefits of network coding compared to the forwarding
approach. It is assumed that $N_R$ relay nodes are placed uniformly
spaced on two sides of a street with width $w_s$. The distance between
the relays is $D_R$, and relay nodes on one side of the street are
shifted by $\frac{D_R}{2}$ with respect to the relay nodes on the
other side. The street has a sidewalk at 2 meters from each
edge. $N_{ue}$ user devices are uniformly randomly dropped on
sidewalks. Figure \ref{fig:Fig4} shows the simulation setup.

\subsection{Channel Model}
\label{subsec-ChanMod}
In this work, a packet erasure $p_{i,j}$ is associated with the link
between device $i$ and relay node $j$, and a packet is assumed to be
received correctly with probability $1-p_{i,j}$ at the
destination. The channel model illustrated in [19] is used to capture
fundamental characteristics of the millimeter-wave channels, namely
high path-loss, and frequent outage.  In particular, a link is assumed
to be in Outage, Line-of-Sight or Non-Line-of-Sight states with
certain device-relay node distance based probabilities, as described
in [19]. If a link is in outage, the corresponding packet erasure
probability is 1.  If the link is in line-of-sight or
non-line-of-sight states, the link erasure probability is set to the
block error rate (BLER) corresponding to that link. To obtain the
BLER, path-loss is computed according to the path-loss model for 28
GHz described in [19]. Then SNR is derived based on assumed parameter
values for beam-forming gain, transmit-power, coding gain, as well as
path-loss. Once SNR is calculated, BLER is determined assuming 64 QAM
transmissions for the downlink and QPSK transmissions for the uplink,
respectively. Further, if the erasure probability of a link is higher
than a threshold $T_e$, that link is not used for communication and
is assumed to be in outage. The parameter values used and other
details are listed in appendix \ref{sec: Appendix-Simparam}.

\subsection{Downlink Simulations}
\label{subsec-DownSim}
In this section, we discuss simulation results comparing the
Forwarding and Network Coding approaches for downlink transmissions to
the devices (i.e., intra-session network coding as described in
section \ref{subsec-intrasessNC}). For this simulation, 5000 devices
are randomly dropped across the deployment detailed in section
\ref{subsec-DepScn}. During each time span, we compare efficiency and
packet delay metrics for transmissions of $k$ packets using the
forwarding approach and the network coding approach, as described in
section \ref{subsec-sched}. Figure~\ref{fig:Fig5a},
Figure~\ref{fig:Fig5c} and Figure~\ref{fig:Fig5e} show the CDF of
efficiency calculated for each device for Forwarding and Network
Coding approaches for $D_R$ of 30 meters, 60 meters and 80 meters,
respectively. The gain of network coding compared to forwarding
increases with increasing density of relay nodes. For example, the
case with $D_R = 30$ meters shows a larger gain for network coding
compared to that of $D_R = 60$ meters and $D_R = 80$ meters. This
behavior is attributed to having more outage-free links at higher
relay node densities (although the links have varying error
rates). The improvement in efficiency points to an increase in data
rates of 35\%-39\% for network coding compared to forwarding, for
$D_R$ of 30 meters. The case with $D_R$ of 60 meters shows an increase
in data rates of 11\%-37\%. The case with $D_R$ of 80 meters shows an
increase in the range of 3\%-34\%.

Figure~\ref{fig:Fig5b}, Figure~\ref{fig:Fig5d} and
Figure~\ref{fig:Fig5f} show the delay in delivering the packets in
each time-span.  As mentioned previously, $k$ packets are delivered in
each time-span. The delay represents the number of time-slots used for
transmissions (i.e., transmission time) until all the $k$ packets are
received at the device. As shown, network coding yields significant
reductions in the delays. The gains are more significant for $D_R$ of
30 meters, i.e., the denser deployment, and decrease with decreasing
relay node density.

\subsection{Simulation of Backhaul Efficiency in Uplink}
\label{subsec-UpSim}

In this section, we discuss simulation results comparing the
Forwarding and Network Coding approaches for uplink transmissions
(i.e., inter-session network coding as described in section
\ref{subsec-intersess}) with $k=1$ to capture the multi-user nature of
the uplink problem. Figure~\ref{fig:Fig6a} and Figure~\ref{fig:Fig6b}
show a comparison of the backhaul efficiencies for the Forwarding and
the Network Coding approaches, when packets of 4 and 8 devices are
combined at relay nodes respectively, for $D_R = 80$
meters. Figure~\ref{fig:Fig6c} and Figure~\ref{fig:Fig6d}, and
Figure~\ref{fig:Fig6e} and Figure~\ref{fig:Fig6f} show corresponding
comparisons of the backhaul efficiencies for $D_R = 60$ meters and
$D_R = 30$ meters, respectively. As can be seen from
Figure~\ref{fig:Fig6}, the backhaul efficiency of network coding
improves relative to forwarding when either (a) the inter-relay
distance is reduced, or (b) the number of devices is increased. The
median network coding backhaul efficiency with 4 devices is
approximately 57\%, 100\% and 320\% better than the median forwarding
backhaul efficiency for inter-relay distances of 80, 60 and 30 meters,
respectively. With 8 devices, the median network coding backhaul
efficiency is approximately 63\%, 123\% and 352\% better than the
median forwarding backhaul efficiency for inter-relay distance of 80,
60 and 30 meters, respectively. The improvement in
backhaul efficiencies directly translate to reduction in backhaul
traffic - for example, the 100\% improvement in backhaul efficiency
due to network coding with 4 devices for the $D_R = 60$ meter case
corresponds to a 50\% reduction in backhaul traffic.

\section{CONCLUSION}
\label{sec:Concl}
We have analyzed and quantified the benefits of Random Linear Network
Coding in millimeter-wave communication systems with a dense deployment
of access points, where individual links show a lot of variation. For
our analysis and comparisons, we use the ratio of number of packets
delivered to the number of transmissions needed for delivery of the
packets as an efficiency metric. For downlink communication, we focus
on the efficiency of network coding for the air-interface
transmissions, whereas for uplink communication, we focus on the
efficiency of network coding for the backhaul transmissions.

We have provided a theoretical upper bound for the expected values of
efficiency of downlink air-interface transmissions for Forwarding and
a theoretical lower bound for the expected values of efficiency for
Network Coding. For the scenarios of interest, we observe that the
lower bound for the expected value of efficiency of network coding is
higher than the upper bound of the expected value of efficiency for
forwarding. This implies that network coding is expected to outperform
forwarding, with the difference being dependent on the diversity of
packet erasure probabilities for the links.

We have derived a lower bound on the uplink backhaul efficiency of the
network coding for the case of user-to-relay links with equal packet
erasure probability. The bound is applicable to a range of erasure
probabilities (low to medium) and implies that network coding can
maintain a backhaul efficiency of close to 1 even for small network
code lengths. The range of erasure probabilities for which the bound
is applicable increases with increasing network code lengths. We also
provided a tight approximation for the backhaul efficiency for the
forwarding case. Comparing the lower bound of network coding to the
approximate backhaul efficiency of forwarding, we have shown that network
coding offers not only a much higher efficiency (multiple times
better) but also a universal backhaul efficiency close to one for the
applicable region of erasure probabilities.

For the simulation, we use a millimeter wave channel model along with
probabilities of links being in outage, non-line-of sight or
line-of-sight, to capture the rapidly changing nature of the
individual links. For the downlink intra-session network coding, our
results show a significant improvement in the efficiency of air-interface
transmissions with the use of network coding. The extent of
improvement depends on the density of the Relay nodes (as represented
by the parameter $D_R$). For example, with Relay nodes deployed every
30 m on each side of the street, we see a median improvement of 35\%
in efficiency, which in turn translates to correspondingly higher data
rates and shorter transmission durations.

For the uplink inter-session network coding, our results show a
significant improvement in the efficiency of backhaul transmissions
with the use of network coding. For example, with Relay nodes every 60
m on each side of the street, we see a median improvement in backhaul
efficiency of 100\% with 4 devices and 123\% with 8 devices (i.e. 50\%
and 55\% lower backhaul traffic for the same number of packets
respectively).

It is remarked that this study aimed to characterize
fundamental benefits of network coding in 5G wireless networks. To
achieve this goal a few assumptions about the physical layer
transmission procedures and protocols are made to simplify the system
model. Therefore, our observations and comparison metrics are useful
only in a relative sense. In order to understand the performance of
network coding in more absolute terms (e.g., spectral efficiency
improvement in b/s/Hz or latency reduction in milliseconds) it is
necessary to have a more comprehensive study using a detailed physical
layer channel structure. The details of such channel models will be
discussed in the coming months in standards bodies such as 3GPP.

\section{Appendix - Backhaul Efficiency Derivation for Network Coding}\label{sec: Appendix-NC-Backhaul}
\label{sec-ncBoundDerv}
In this appendix, the average number of required Backhaul
Transmissions (BHT) for network coding, $\beta_{nc}$, is obtained. The
averaging is done over source-relay link conditions when each
source-relay link erasure probability is $p$. Each backhaul
transmission from a relay is a network-coded packet constructed at the
relay from $z$ encoding coefficients. The $z$ encoding coefficients
are independently and identically distributed elements each drawn
according to distribution $\Omega$ (see section
\ref{subsec-intersess}). The network has to collect $z$ independent
network coded packets from the relays in order to decode all the $z$
original packets transmitted from the sources.

Assume that the network has received $z+l$ ($l\geq 0$) BHTs from the
relays. If it can construct a full rank $z\times z$ matrix from the
set of $z+l$ rows (each row corresponding to the encoding coefficient
vector for a network coded packet), the network can decode all the
original packets. We use $\mathbb{M}_l$ to denote the set of
$\zeta(l)=\binom{z+l}{z}$ possible $z\times z$ matrices that can be
constructed from the $z+l$ rows. Each matrix belonging to the set is
singular with probability $\phi$ which is a function of $z$, $p$, and
the field size $q$.

The number of required BHT is at most $z+l$ when every matrix $M^o$ in
the set $\mathbb{M}_{l-1}$ is singular and there is at least one
$z\times z$ matrix $M^*$ in the set $\mathbb{M}_l$ which is
non-singular. In the following, the rank of a matrix $M$ is
represented by $\nu(M)$, and $Pr(.)$ denotes the probability of an
event.
\begin{eqnarray}
\label{eq: BHT_bound}
&\beta_{nc}(\phi)&= \sum_{l=0}^{\infty} (z+l) \ Pr(z+l
\ \text{BHTs are needed})\nonumber\\ 
       {}&\leq &\sum_{l=0}^{\infty} (z+l)\ Pr\Big(\exists M^*\in
       \mathbb{M}_l \setminus \mathbb{M}_{l-1},\nonumber\\
       & &\ \hspace{0.5cm} \nu(M^*)=z, \nexists M^o\in
       \mathbb{M}_{l-1}, \nu(M^o)=z\Big)\nonumber\\ 
       &=&\sum_{l=0}^{\infty} (z+l)\,Pr\Big(\exists M^*\in
       \mathbb{M}_l \setminus \mathbb{M}_{l-1},\nu(M^*)=z \mid\nonumber\\
       & & \hspace{2.2cm} \nexists M^o\in \mathbb{M}_{l-1},
       \nu(M^o)=z\Big)\nonumber\\ 
       & &\hspace{0.5cm}\times\ Pr\Big(\nexists M^o\in \mathbb{M}_{l-1},
          \nu(M^o)=z\Big)\nonumber\\ 
          &=&\sum_{l=0}^{\infty} (z+l)\ Pr\Big(\nexists
          M^o\in \mathbb{M}_{l-1},
          \nu(M^o)=z\Big)\nonumber\\
          & &\times \ \Bigg[1-Pr\Big(\nexists M^*\in\mathbb{M}_l \setminus \mathbb{M}_{l-1},\nonumber\\
            & & \hspace{0.5cm}\nu(M^*)=z \mid \nexists M^o\in \mathbb{M}_{l-1},
            \nu(M^o)=z\Big)\Bigg]\nonumber\\ 
          &=&\sum_{l=0}^{\infty} (z+l)\ \phi^{\zeta(l-1)}
            \ \big[1-\phi^{\zeta(l)-\zeta(l-1)}\big]
\end{eqnarray} 

The following
lemma proves that for $\phi\leq \phi_{ub}$, we have
$\beta_{nc}(\phi)\leq \beta_{nc}(\phi_{ub})$.
\begin{lemma}
\label{lemma:Monotonicity}
$\beta_{nc}(\phi)$ is a non-decreasing function of $\phi$.
\end{lemma}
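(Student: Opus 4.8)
The plan is to recast $\beta_{nc}(\phi)$ into a closed form whose monotonicity is manifest. The first step is purely algebraic: each summand in Eq.~(\ref{eq: beta_bound1}) simplifies as $(z+l)\,\phi^{\zeta(l-1)}\big[1-\phi^{\zeta(l)-\zeta(l-1)}\big]=(z+l)\big(\phi^{\zeta(l-1)}-\phi^{\zeta(l)}\big)$, so that $\beta_{nc}(\phi)=\sum_{l=0}^{\infty}(z+l)\big(\phi^{\zeta(l-1)}-\phi^{\zeta(l)}\big)$. The crucial structural observation is that consecutive exponents chain, since the exponent $\zeta(l)$ appearing with a minus sign in term $l$ reappears as $\zeta((l+1)-1)$ with a plus sign in term $l+1$; this is exactly the pattern needed for a summation-by-parts (Abel) argument.

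Second, I would apply summation by parts to the partial sums $S_N(\phi)=\sum_{l=0}^{N}(z+l)\big(\phi^{\zeta(l-1)}-\phi^{\zeta(l)}\big)$. Using the convention $\zeta(-1)=0$, so that $\phi^{\zeta(-1)}=1$, the telescoping collapses the partial sum to
\[
S_N(\phi)=z+\sum_{l=0}^{N-1}\phi^{\zeta(l)}-(z+N)\,\phi^{\zeta(N)}.
\]
Letting $N\to\infty$ then yields the closed form $\beta_{nc}(\phi)=z+\sum_{l=0}^{\infty}\phi^{\zeta(l)}$, a constant plus a series of pure powers of $\phi$.

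Third, monotonicity is then immediate. For every $l\ge 0$ we have $\zeta(l)=\binom{z+l}{z}\ge 1$, so each map $\phi\mapsto\phi^{\zeta(l)}$ has nonnegative derivative $\zeta(l)\,\phi^{\zeta(l)-1}\ge 0$ on $[0,1)$ and is therefore non-decreasing, while the constant term $z$ contributes nothing. Hence $\beta_{nc}$, being a sum of non-decreasing functions, is non-decreasing, which is exactly the claim. (Equivalently, one may read $\beta_{nc}(\phi)=z+\mathbb{E}[L]$ where $L$ is the excess number of backhaul transmissions with tail $\Pr(L>l)=\phi^{\zeta(l)}$, and deduce monotonicity from stochastic dominance in $\phi$; but the direct algebraic route is the most economical.)

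The only real obstacle is analytic: justifying the passage $N\to\infty$, namely that the boundary term $(z+N)\,\phi^{\zeta(N)}$ vanishes and that the implicit rearrangement is legitimate. This is precisely where the growth rate of $\zeta$ enters — since $\zeta(N)=\binom{z+N}{z}$ grows polynomially in $N$ (like $N^{z}/z!$), for fixed $\phi<1$ the factor $\phi^{\zeta(N)}$ decays super-exponentially and dominates the linear prefactor $(z+N)$, so the boundary term tends to $0$ and the tail series converges (indeed absolutely, since every term $\phi^{\zeta(l)}$ is nonnegative on $[0,1)$). Nonnegativity of all terms also makes the limit and the term-by-term derivative bound legitimate by monotone convergence, so no additional care is required. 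It is worth noting that the bound is of course only asserted for $\phi<1$: as $\phi\to 1^-$ the series diverges, consistent with $\beta_{nc}\to\infty$, and at $\phi=0$ the closed form gives $\beta_{nc}(0)=z$, matching the intuition that when matrices are never singular exactly $z$ transmissions are needed.
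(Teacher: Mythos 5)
Your proof is correct, and it takes a genuinely different route from the paper's. The paper starts from the same simplification of each summand to $(z+l)\bigl(\phi^{\zeta(l-1)}-\phi^{\zeta(l)}\bigr)$, but then differentiates the series term by term, reindexes the two resulting sums against each other, and concludes that $\frac{d\beta_{nc}}{d\phi}=\sum_{l=0}^{\infty}\zeta(l)\,\phi^{\zeta(l)-1}>0$ on $(0,1)$. You perform essentially that same reindexing at the level of the sum itself rather than of its derivative: Abel summation collapses the partial sums to $z+\sum_{l=0}^{N-1}\phi^{\zeta(l)}-(z+N)\,\phi^{\zeta(N)}$, the polynomial growth of $\zeta(N)=\binom{z+N}{z}$ makes the boundary term vanish as $N\to\infty$, and you obtain the closed form $\beta_{nc}(\phi)=z+\sum_{l=0}^{\infty}\phi^{\zeta(l)}$, from which monotonicity is read off term by term. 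The two arguments are dual to each other --- differentiating your closed form reproduces exactly the paper's expression for $\frac{d\beta_{nc}}{d\phi}$ --- but yours buys some extra rigor: it never has to justify term-by-term differentiation of an infinite series (a step the paper carries out without comment), relying instead only on pointwise convergence and the elementary fact that a convergent sum of non-decreasing functions is non-decreasing; it also makes the convergence region $\phi<1$ and the endpoint value $\beta_{nc}(0)=z$ explicit. What the paper's route buys in exchange is strict monotonicity on $(0,1)$, which is marginally stronger than the non-decreasing claim in the lemma statement.
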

\begin{proof}
In the following, it is shown that $\frac{d\beta_{nc}}{d\phi}>0$ for
$0<\phi<1$.
\begin{eqnarray*}
\frac{d\beta_{nc}}{d\phi} &=&\frac{d }{d\phi} \big[\sum_{l=0}^{\infty}
  (z+l)[1 - \phi^{\zeta(l) - \zeta(l-1)}] \phi^{\zeta(l-1)}\big]\\ 
&=&\frac{d }{d\phi}\big[\sum_{l=0}^{\infty} (z+l)(\phi^{\zeta(l-1)} -
  \phi^{\zeta(l)})\big]\\ 
&=&\frac{d }{d\phi} \big[z(1-\phi) +
  \sum_{l=1}^{\infty} (z+l)(\phi^{\zeta(l-1)} -
  \phi^{\zeta(l)})\big]\\ 
&=& -z+\sum_{l=1}^{\infty} (z+l) \zeta(l-1)
   \phi^{\zeta(l-1) - 1}\\
   & & \hspace{0.2cm}-\ \sum_{l=1}^{\infty} (z+l) \zeta(l)\phi^{\zeta(l)-1}\\ 
&=&-z+\sum_{l=0}^{\infty} (z+l+1)\zeta(l) \phi^{\zeta(l)-1}\\
   & & \hspace{0.2cm}-\ \sum_{l=1}^{\infty} (z+l)\zeta(l)
   \phi^{\zeta(l)-1}\\ 
&=&-z+(z+1)\zeta(0)\phi^{\zeta(0)-1} +\\ 
& &\hspace{0.2cm}\sum_{l=1}^{\infty} (z+l+1)\zeta(l)\phi^{\zeta(l)-1}
-\sum_{l=1}^{\infty} (z+l)\zeta(l)\phi^{\zeta(l)-1}\\
&=&\sum_{l=0}^{\infty} \zeta(l) \phi^{\zeta(l)-1}>0.
\end{eqnarray*}\end{proof}

Next we derive an upper bound $\phi_{ub}$ for the probability $\phi$
of a $z\times z$ matrix with elements drawn according to distribution
$\Omega$. $d(M)$ is used to denote $z-\text{rank}(M)$,
called the \emph{defect} of $M$. First we restate the following
results from \cite{Blomer1997}.

\begin{definition}\label{def: ld}
Let $M$ be a $z\times z$ matrix over Galois field $\text{GF}[q]$. Let
$M_1,\cdots,M_z$ denote the rows of $M$. A vector $(c_1,\cdots,c_z)
\in (\text{GF}[q])^{z}$ such that not all $c_i$ are zero is called a
linear dependency of $M_1,\cdots,M_z$ iff $\sum_{i=1}^{z} c_i M_i = 0$.
\end{definition}

\begin{theorem}
(Lemma 3.2 and Theorem 3.3 of \cite{Blomer1997}) Let $M$ be a random $z\times z$
  matrix over a Galois field of size $q$, where elements of $M$ are
  chosen according to the probability distribution $\Omega$. Let
  $d(M)$ denote the defect of $M$ and $l(M)$ denote the number of
  linear dependencies of the rows of $M$. Then,
\begin{equation}
\label{eq:dM}
q^{d(M)} - 1 = l(M).
\end{equation}
The expected number of linear dependencies of the rows of $M$ is:
\begin{equation}
\label{eq:lm1}
\hspace{-3pt}\bar{l}(M)=\sum_{k=1}^{z}\binom{z}{k}
\frac{1}{q^{z-k}}(1-\frac{1}{q})^k\,
\Big[1+(q-1)\big(1-\frac{q(1-p)}{q-1}\big)^k\Big]^z.
\end{equation}  
\end{theorem}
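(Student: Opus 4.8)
I would treat the two assertions of the theorem separately, since the first is pure linear algebra and the second a moment computation. For the identity $q^{d(M)}-1=l(M)$, I would argue that the set of linear dependencies of the rows, together with the zero vector, is exactly the left null space $\{c\in(\text{GF}[q])^z : c^{\mathsf T}M=0\}$. This is a $\text{GF}[q]$-subspace whose dimension equals $z-\operatorname{rank}(M)=d(M)$, since the row rank of $M$ equals its column rank. A subspace of dimension $d(M)$ over a field with $q$ elements contains exactly $q^{d(M)}$ vectors; discarding the zero vector (excluded from the definition of a dependency) leaves $q^{d(M)}-1$ of them, giving $l(M)=q^{d(M)}-1$. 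This step is essentially immediate once the left null space is identified.

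The expected-value formula is the substantive part, and I would compute $\bar l(M)=\mathbb{E}[l(M)]$ by linearity of expectation over all candidate dependencies: $\bar l(M)=\sum_{c\ne 0}\Pr(c^{\mathsf T}M=0)$. The two key structural observations are: (i) the $z$ columns of $M$ are independent and identically distributed, so $\Pr(c^{\mathsf T}M=0)=\rho(c)^z$ where $\rho(c)=\Pr\big(\sum_{i}c_iX_i=0\big)$ for a single column $X=(X_1,\dots,X_z)$ of i.i.d.\ $\Omega$ entries; and (ii) $\rho(c)$ depends on $c$ only through its Hamming weight $k=\#\{i:c_i\ne 0\}$. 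Observation (ii) holds because $\Omega$ is uniform on the nonzero field elements, so for $c_i\ne 0$ the product $c_iX_i$ is again $\Omega$-distributed; hence $\sum_i c_iX_i$ has the law of a sum of $k$ i.i.d.\ $\Omega$ variables.

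It then remains to compute $\rho_k:=\Pr(\text{sum of }k\text{ i.i.d.\ }\Omega\text{ equals }0)$. I would do this by a one-step convolution recursion: writing $\lambda:=1-\tfrac{q(1-p)}{q-1}$ and using the symmetry that a nonzero partial sum is equidistributed over the nonzero elements, one obtains $\rho_{k+1}=\lambda\,\rho_k+\tfrac{1-p}{q-1}$, a linear recursion with fixed point $1/q$ and multiplier $\lambda$. Solving from $\rho_0=1$ yields $\rho_k=\tfrac1q\big[1+(q-1)\lambda^k\big]$. Finally, counting the $\binom{z}{k}(q-1)^k$ vectors of weight $k$ and summing $\binom{z}{k}(q-1)^k\,\rho_k^{\,z}$ over $k=1,\dots,z$ gives the claim, after rewriting $(q-1)^k/q^z=\tfrac{1}{q^{z-k}}(1-\tfrac1q)^k$ to match the stated form.

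The main obstacle is the computation of $\rho_k$ together with the weight-reduction (ii); everything else is bookkeeping. An alternative to the recursion is a short additive-character argument: for a nontrivial character $\chi$ one finds $\mathbb{E}[\chi(X)]=\lambda$, and orthogonality of characters gives $\rho_k=\tfrac1q\sum_\chi(\mathbb{E}[\chi(X)])^k=\tfrac1q[1+(q-1)\lambda^k]$. Either route reproduces the content of Lemma 3.2 and Theorem 3.3 of \cite{Blomer1997}, so the self-contained proof above simply re-derives their result in the present notation.
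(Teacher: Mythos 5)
Your proof is correct in all its steps. One point of comparison: the paper does not actually prove this theorem---it restates it as an imported result (Lemma 3.2 and Theorem 3.3 of the cited Blomer--Karp--Welzl paper) and uses it as a black box in the derivation of $\phi_{ub}$. So your argument is not an alternative to a proof in the paper but a self-contained re-derivation, and it is sound: identifying the dependencies together with the zero vector as the left null space gives $l(M)=q^{d(M)}-1$, since that space has dimension $d(M)=z-\operatorname{rank}(M)$; linearity of expectation gives $\bar l(M)=\sum_{c\neq 0}\Pr(c^{\mathsf T}M=0)$; independence of the columns and invariance of $\Omega$ under multiplication by a nonzero scalar reduce this to $\sum_{k=1}^{z}\binom{z}{k}(q-1)^k\rho_k^{\,z}$, with $\rho_k$ the probability that $k$ i.i.d.\ $\Omega$-variables sum to zero. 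Your recursion $\rho_{k+1}=\lambda\rho_k+\frac{1-p}{q-1}$ in fact needs no equidistribution step: conditioned on any nonzero partial sum $w$, the next summand must equal $-w$, an event of probability $\frac{1-p}{q-1}$, and conditioned on a zero partial sum the next summand must be $0$, which has probability $p$; this gives the stated recursion with multiplier $\lambda=p-\frac{1-p}{q-1}=\frac{pq-1}{q-1}=1-\frac{q(1-p)}{q-1}$, and solving from $\rho_0=1$ with fixed point $\frac{1}{q}$ yields $\rho_k=\frac{1}{q}\bigl[1+(q-1)\lambda^k\bigr]$. Since $(q-1)^k/q^z=\frac{1}{q^{z-k}}(1-\frac{1}{q})^k$, the sum over Hamming weights is exactly Eq.~(\ref{eq:lm1}); the character-orthogonality variant you sketch is equally valid. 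What your derivation adds over the paper's citation is a verification, in the paper's own notation and for its specific distribution $\Omega$, that the transcribed formula is correct; what the citation buys is brevity.
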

Now we show that
\begin{lemma}
\label{lemma:phi_ub}
$\phi\leq\phi_{ub}$ where $\phi_{ub}\triangleq \log_q(\bar{l}(M)+1)$.
\end{lemma}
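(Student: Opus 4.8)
The plan is to relate the singularity probability $\phi$ to the expected number of linear dependencies $\bar{l}(M)$ by routing through the defect $d(M)$, using the identity $q^{d(M)}-1=l(M)$ restated in Eq.~(\ref{eq:dM}) from \cite{Blomer1997}. First I would observe that a $z\times z$ matrix is singular exactly when its rank is deficient, i.e. when $d(M)\geq 1$, so that $\phi = Pr(d(M)\geq 1)$. Since $d(M)=z-\text{rank}(M)$ takes values in the non-negative integers, the elementary bound $Pr(d(M)\geq 1)\leq \mathbb{E}[d(M)]$ applies, because each term in $\sum_{k\geq 1}k\,Pr(d(M)=k)$ dominates the corresponding term in $\sum_{k\geq 1}Pr(d(M)=k)$. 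This establishes the first link $\phi\leq \mathbb{E}[d(M)]$.

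Next I would take expectations in the identity $q^{d(M)}=l(M)+1$ to obtain $\mathbb{E}[q^{d(M)}]=\bar{l}(M)+1$, which is exactly the quantity computed in Eq.~(\ref{eq:lm1}). It then remains to pass from $\mathbb{E}[d(M)]$ up to $\log_q\big(\mathbb{E}[q^{d(M)}]\big)$. Since $\log_q(\cdot)$ is concave, Jensen's inequality gives $\mathbb{E}[d(M)]=\mathbb{E}[\log_q q^{d(M)}]\leq \log_q \mathbb{E}[q^{d(M)}]$. Chaining the three inequalities yields $\phi\leq \mathbb{E}[d(M)]\leq \log_q\mathbb{E}[q^{d(M)}]=\log_q(\bar{l}(M)+1)=\phi_{ub}$, which is the claim.

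The main conceptual step is recognizing that the number of dependencies $l(M)$ is the exponential functional $q^{d(M)}-1$ of the defect, so its expectation controls the moment $\mathbb{E}[q^{d(M)}]$ rather than $\mathbb{E}[d(M)]$ directly; the integer-valued Markov bound on the low side and the concavity of $\log_q$ on the high side then bracket the quantities so that the exponential inflation works in our favor. The only point requiring care is that both inequalities point the same way: the integer Markov bound pushes $\phi$ below $\mathbb{E}[d(M)]$, while concavity of the logarithm pushes $\mathbb{E}[d(M)]$ below $\log_q\mathbb{E}[q^{d(M)}]$. I would also note that the bound is informative precisely in the regime where $\phi_{ub}<1$, which for fixed $z$ and $q$ carves out the feasible (small-to-medium $p$) range in which the monotonicity of Lemma~\ref{lemma:Monotonicity} then converts $\phi\leq\phi_{ub}$ into the usable estimate $\beta_{nc}(\phi)\leq\beta_{nc}(\phi_{ub})$.
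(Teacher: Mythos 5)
Your proposal is correct and follows essentially the same route as the paper's proof: the paper likewise first establishes $\phi\leq\bar{d}(M)$ (its bound $\bar{\nu}\leq z-\phi$ on the expected rank is exactly your integer Markov inequality $Pr(d(M)\geq 1)\leq\mathbb{E}[d(M)]$ in disguise), and then applies concavity of $\log_q$ to the identity $d(M)=\log_q(l(M)+1)$ to get $\bar{d}(M)\leq\log_q(\bar{l}(M)+1)=\phi_{ub}$. No gaps; the two arguments are the same up to phrasing.
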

\begin{proof}
Let $\nu$ denote the rank of the random $z\times z$ matrix $M$. Let
$d(M)$ denote the defect of matrix $M$ (i.e., $z-\nu$) and $\bar{d}(M)$
denote the expected value of the defect of $M$. It is first shown that
for the random matrix $M$, $\phi\leq \bar{d}(M)$, and then
$\bar{d}(M)\leq \phi_{ub}$.
\begin{eqnarray}
\label{eq:gamma}
 \bar{\nu}&=&\sum_{k=1}^{z}k\,Pr(\nu=k) \nonumber\\
 &=&\sum_{k=1}^{z-1}k\,Pr(\nu=k)+z\,(1-\phi)\nonumber\\
&\leq &  (z-1)\,\sum_{k=1}^{z-1}Pr(\nu=k)+z\,(1-\phi)\nonumber\\
 & = & (z-1)\phi+z(1-\phi)\nonumber\\
 \bar{\nu}&\leq&z-\phi
\end{eqnarray}
From Eq (\ref{eq:gamma}), given that $\bar{d}(M) = z -\bar{\nu}$, we
have $\phi\leq\bar{d}(M)$.  
From Eq (\ref{eq:dM}) we have $d(M) = \log_{q}(l(M)+1)$. Therefore,
$\bar{d}(M) = \mathbb{E}[[\log_{q}(l(M)+1)]]$. Using the concavity
property of the logarithm function, we have 
\begin{equation}
\bar{d}(M) \leq \log_{q}(\bar{l}(M)+1).
\end{equation}
\end{proof}
Based on Lemmas \ref{lemma:Monotonicity}, and Eq (\ref{eq: BHT_bound}) we have:
\[\beta_{nc}(\phi) \leq \beta_{nc}(\phi_{ub}).\]
Substituting $\phi_{ub}$ from Lemma \ref{lemma:phi_ub}, we have:
\[\beta_{nc}(\phi) \leq \sum_{l=0}^{\infty} (z+l)\times
\big[1-\phi_{ub}^{\zeta(l)-\zeta(l-1)}\big] \times
\phi_{ub}^{\zeta(l-1)}.\]

\section{Appendix - Simulation Parameters}\label{sec: Appendix-Simparam}
\renewcommand{\arraystretch}{1.2}
\small
\begin{center}
  \begin{tabular}{|p{2.5cm}|p{5.2cm}|}
    \hline
        {\bf Parameters} & {\bf Values}\\ \hline
        $N_R$ (number of Relay Nodes) & 10\\ \hline
        $D_R$ (Inter-relay distance) & 30, 60 and 80 meters\\ \hline
        $w_s$ (Street Width) & 20 meters \\ \hline
        Path-loss & Cf. first row of table I\cite{Akdeniz2014}, 28 GHz parameters \\ \hline
        Outage, Line of Sight and Non-line of Sight probabilities & Cf. second row of table I and Eq (8)\cite{Akdeniz2014}\\ \hline 
        Transmit Power & 30 dBm (Downlink) and 20 dBm (Uplink) \\ \hline
        Beam-forming Gain & 20 dB (Downlink) and 0 dB (Uplink) \\ \hline
        Coding Gain & 6 dB \\ \hline
        Noise Power & -87 dBm \\ \hline
        Noise Figure & 5 dB \\ \hline
        Modulation & 64 QAM (Downlink) and QPSK (Uplink) \\ \hline
        BLER formula & 
        \begin{tabular}{l}
          $(1 - p_b)^{\text{block-length}}$, where \\
          $p_{b,64QAM} = 0.2917 \mathit{erfc}(\sqrt{\frac{9 SNR}{63}})$, and\\
          $p_{b,QPSK} = 0.5 \mathit{erfc}(\sqrt{SNR})$
        \end{tabular} \\ \hline
        $\text{block-length}$ & 10000 bits \\ \hline
        $T_e$ & 0.9 \\ \hline
  \end{tabular}
\end{center}
\end{document}